\newtheorem{theorem}{Theorem}
\newtheorem{definition}{Definition}
\newtheorem{proposition}[theorem]{Proposition}
\newtheorem{remark}{Remark}
\newtheorem*{definition*}{Definition}
\begin{document}

\title{Nonzero Classical Discord}

\author{Vlad Gheorghiu}
\email{vgheorgh@gmail.com}
\affiliation{Institute for Quantum Computing, University of Waterloo, 
Waterloo, ON, N2L 3G1, Canada}
\affiliation{Institute for Quantum Science and Technology,
University of Calgary, Calgary, AB, T2N 1N4, Canada}

\author{Marcos C. de Oliveira}
\email{marcos@ifi.unicamp.br}
\affiliation{Instituto de F\'{\i}sica Gleb Wataghin, Universidade Estadual de Campinas, CEP 13083-859,
Campinas, SP, Brazil}
\affiliation{Institute for Quantum Science and Technology,
University of Calgary, Calgary, AB, T2N 1N4, Canada}

\author{Barry C. Sanders}
\email{sandersb@ucalgary.ca}
\affiliation{Institute for Quantum Science and Technology,
University of Calgary, Calgary, AB, T2N 1N4, Canada}
\affiliation{Institute for Quantum Computing, University of Waterloo, 
Waterloo, ON, N2L 3G1, Canada}
\affiliation{Program in Quantum Information Science, Canadian Institute for Advanced Research, Toronto, Ontario M5G 1Z8, Canada}

\date{Version of \today}

\begin{abstract}
Quantum discord is the quantitative difference
between two alternative expressions for bipartite mutual information, given respectively in terms of two distinct definitions for the conditional entropy.
By constructing a stochastic model of shared states,
classical discord can be similarly defined, quantifying the presence of some stochasticity in the measurement process. Therefore, discord can generally be understood as a quantification of the system's state disturbance due to local measurements, be it quantum or classical.
We establish an operational meaning of classical discord in the context of state merging
with noisy measurement
and thereby show the quantum-classical separation 
in terms of a negative conditional entropy.

\end{abstract}

\pacs{03.65.Ta, 03.67.Mn, 05.40.-a}
\maketitle
Entanglement exemplifies the mystery of quantum mechanics,
for example the conundrum of Schr\"{o}dinger's cat~\cite{Sch35a,Sch35b,Sch35c},
and embodies the quintessential resource for quantum information processing,
such as the consumable ebits for quantum teleportation~\cite{BBC+93}.
Recently significant effort is expended on extending the notion of entanglement
to generalized bipartite ``quantum correlations''~\cite{RevModPhys.84.1655}
with ``quantum discord'' the most ubiquitous of these measures
and furthermore universal in the sense that only a negligibly few states
have zero discord~\cite{PhysRevA.81.052318}.

The huge effort into studying discord is driven by the optimism that,
``Algorithms could instead tap into a quantum resource called discord, which would be far cheaper and easier to maintain in the lab''~\cite{Mer11}.
Just as entanglement is operationalized~\cite{Bri27}
by teleportation,
discord is operationalized by state merging~\cite{HOW05,HOW07,PhysRevA.83.032324,MS11}.
Therefore, discord can be understood not just as a mathematical characterization of a state
but also as a quantity relevant for performing a certain information task.

The question we address is whether this resource,
operationalized by state merging,
is restricted only to the quantum world or can be described within a classical theory.

To assess the quantumness of discord,
we need to transcend the dichotomy of quantum vs.\ classical mechanics,
in which something that is not classical,
in the sense of deterministically evolving objects with arbitrarily precise properties,
is \emph{ipso facto} quantum. Similar issues have been recently raised in the context
of weak measurements \cite{PhysRevLett.113.120404}.
Development of a quantum ``toy theory''~\cite{Spe07} is an example of breaking this dichotomy by finding a self-consistent theory that is broader than classical theory yet not as powerful as quantum theory.
A famous middle-ground approach arises by replacing classical mechanics by stochastic mechanics,
or stochastic field theory~\cite{Mil76},
in which case either or both the precision of specifying states and the dynamics are sacrificed.
Those theories do not rule out quantum phenomena such as quantum coherence and entanglement; instead they give an alternative description or possibility of emergence of analogous behavior in terms of purely classical stochastic processes.

The essential feature being captured by the quantum discord is how much a bipartite system state is affected by measurements. The peculiar structure of quantum measurements intrinsically leads to conditioned state changes when a detection is made and naturally introduces the notion of correlations that cannot be locally accessed even when there is no entanglement. There is a clear distinction of this situation from a classical one, when perfect measurements are enacted. However a scenario of imperfect classical measurements can be conceived  where such an analogous situation to the quantum one could exist. The purpose of this Letter is to propose such a scenario by introducing the notion of classical discord, which shares the same feature of its quantum counterpart, in signaling how much a system state is disturbed by measurements.

We begin by reviewing the essence of discord.
Consider two parties sharing some state of information.
A precisely known state of information can be represented by an $n$-bit string
$\bm{i}\in\{0,1\}^n$,
which we now generalize to a state in stochastic information theory.
\begin{definition}
A stochastic-information state is
a distribution $p(\bm{i})\geq 0$ such that $\sum p(\bm{i})=1$.
\end{definition}
The Shannon entropy of this stochastic-information state is $H=-\sum p(\bm{i})\log p(\bm{i})$
with the base~2 logarithm notation suppressed.
In the quantum case the state is a trace-class bounded positive operator~$\rho$
on the Hilbert space represented by $\mathbb{C}_2^{\otimes n}$ with basis
$\left\{|\bm{i}\rangle\right\}$,
and the state is pure if $\rho^2=\rho$.
The state's von Neumann entropy is $S=-\text{tr}(\rho\log \rho)$.

Mutual information relates to shared information between two parties often called ``Alice'' (A)
and ``Bob'' (B).
In the stochastic classical case the joint state can be described by a probability matrix
\begin{equation}
\label{eq:pAB}
	p_{AB}
		=\sum_{\bm{i},\bm{j}}p_{AB}\left(\bm{i},\bm{j}\right)
			\chi(\bm{i},\bm{j}),
\end{equation}
where $\bm{i}$ labels the rows and $\bm{j}$ labels the columns, and the $\bm{i},\bm{j}$ entry $p_{AB}(\bm{i},\bm{j})$ represents the joint probability of Alice having the bit string $\bm{i}$ and Bob having the bit string $\bm{j}$, respectively.
Here~$\chi(\bm{i},\bm{j})$ represents the $2^n\times 2^n$ matrix with zeros everywhere except at the location $(\bm{i}, \bm{j})$,
which is the classical version of the quantum basis tensor product
\begin{equation}
\label{eq:chi}
	\hat{\chi}(\bm{i},\bm{j})
		:=|\bm{i}\rangle\langle\bm{i}|\otimes|\bm{j}\rangle\langle\bm{j}|.
\end{equation}
The strings~$\bm{i}$ and~$\bm{j}$ need not be equal length, but we make the restriction
that Alice and Bob each hold $n$-bit strings for convenience of notation.
In the quantum case the joint state~$\rho_{AB}$ acts on $\mathbb{C}_2^{\otimes n}\otimes\mathbb{C}_2^{\otimes n}$. 
The notation we used in \eqref{eq:pAB} can be generalized to encompass tripartite or multipartite stochastic classical states. For example, in the tripartite case, such a state can be described by a rank-3 probability tensor $p_{ABC}=\sum_{\bm{i},\bm{j},\bm{k}}p_{ABC}(\bm{i},\bm{j},\bm{k})\chi(\bm{i},\bm{j},\bm{k})$, where $\chi(\bm{i},\bm{j},\bm{k})$ is now a rank-3 tensor, representing the classical equivalent of the quantum basis tensor product $|\bm{i}\rangle\langle\bm{i}|\otimes|\bm{j}\rangle\langle\bm{j}|\otimes|\bm{k}\rangle\langle\bm{k}|$.

Let~$H(A)$ signifying the entropy of~A's marginal distribution and similar for~B and for A and~B.
The mutual information $I(A;B)=H(A)+H(B)-H(A,B)$
between Alice and Bob is mathematically equivalent to $J(A;B):=H(A)-H(A|B)$
for
\begin{equation}
\label{eq:HAB}
	H(A|B)
		=-\sum_{\bm{i}}p_{A}(\bm{i})
			\sum_{\bm{j}}p_{B}(\bm{j}|\bm{i})\log p_{B}(\bm{j}|\bm{i}).
\end{equation}
Operationally,
this conditional entropy is obtained by Bob measuring and announcing his results to Alice.

If Alice's state has zero entropy conditioned on Bob's results,
and vice versa for Bob's state conditioned on Alice's results,
we refer to this state as conditionally pure.
\begin{definition}
\label{def:pure}
The stochastic-information joint state~$p_{AB}\left(\bm{i},\bm{j}\right)$
is conditionally pure iff $H(A|B)=0$.
\end{definition}
\begin{remark}
The state is trivially pure if the Shannon entropy of the stochastic-information state is zero,
but this notion of conditional purity allows noisy preparation provided that
Alice and Bob can still recover pure states by one-way communication.
The state would not be conditionally pure if either Alice or Bob's announcements are based
on results from noisy measurements.
\end{remark}
Note that the conditional entropy is zero if and only if the results of the measurements on $B$ are completely determined by $A$, or, equivalently, if the marginal distribution on $B$ is a function only of $A$ (see Ch.~2 of \cite{CoverThomas:ElementsOfInformationTheory}). Hence, according to our definition, a stochastic-information state is conditionally pure if and only if it has the form
\begin{equation}\label{eqnv2}
p_{AB} = \sum_{\bm{i}}p_{AB}\left(\bm{i},f(\bm{i})\right) \chi(\bm{i},f(\bm{i})),
\end{equation}
where $f(\bullet)$ can be any bijection on bit strings of length $n$. A simple two bit example of a conditionally pure state is $p_{AB}=(1-p)\chi(0,1)+p\chi(1,0)$, with $0\leqslant p \leqslant 1$.

In the quantum case,
the same expression for~$I$ above holds
but with the Shannon entropy~$H$ being replaced by the von Neumann entropy~$S$. In that situation the expression for the conditional entropy~\cite{NC00}
$S(A|B):=S(A,B)-S(B)$, can be negative, signaling the presence of entanglement, 
thereby making its operational interpretation less straightforward.

To address the operational challenge of quantum conditional entropy,
an alternative version explicitly dependent on one party's choice of measurements
has been introduced~\cite{OZ01}.
For $\Pi:=\{\pi_k\}$ representing Bob's choice of projective measurement basis
and
\begin{equation}
	\rho_{A|k}
		:=\frac{\text{tr}_{B}[(\mathds{1}\otimes \pi_k)\rho_{AB}]}
			{\text{tr}[(\mathds{1}\otimes \pi_k)\rho_{AB}]}
\end{equation}
being the resultant states conditioned on Alice's side,
this measurement-dependent conditional quantum entropy is
\begin{equation}
	S^\Pi(A|B):=\sum_{k}p_kS(\rho_{A|k})\geqslant 0,
\end{equation}
where $p_k=\text{tr}[(\mathds{1}\otimes \pi_k)\rho_{AB}]$ denotes the probability of Bob obtaining the result $k$ when measuring the state $\rho$ in the basis $\{\pi_k\}$.
Operationally this nonnegative quantity represents how much information,
on average,
 Alice can extract from her system given Bob's measurement results.
 
Quantum discord, defined as~\cite{OZ01}
\begin{equation}
	\mathcal{D}_{A\rightarrow B}(\rho_{AB}):=\min_\Pi S^\Pi(A|B) - S(A|B) \label{disc}
\end{equation}
is equivalent to
\begin{equation}
\label{eq:DIJ}
	\mathcal{D}_{A\rightarrow B}(\rho_{AB})=I(A;B) - \max_\Pi J^\Pi(A;B),
\end{equation}
with
\begin{equation}
	J^\Pi(A;B):=S(A) - S^\Pi(A|B),
\end{equation}
being the version of mutual information
that employs measurement-dependent quantum conditional entropy.
The term~$J^\Pi(A;B)$ is the ``classical'' part of the correlations;
hence, nonzero quantum discord is attributed to genuine quantum correlations 
independent of measurements regardless of the measurement basis.
In this perspective,
zero discord is proclaimed as being classical due to $I\equiv J$. 
An interpretation of quantum discord as given by Eq.~\eqref{disc} is that of a measure of how much the bipartite quantum system is affected by local measurements. We argue in the following that such interpretation is directly applicable to the classical case, where we define the \emph{classical discord}, and show that its properties closely resembles the ones of quantum discord

Our aim is to show that nonzero discord can hold classically by treating
conditional entropy operationally and by incorporating noise into Bob's measurement process,
recognizing that any measurement procedure cannot be perfect.

In other words we now establish that $I\not\equiv J$ if measurement is at all noisy.
This noisy measurement arises naturally within
the stochastic reconciliation protocol used
to construct the conditional entropy.

In contrast with the definition of quantum discord, which invokes a minimization over all possible bases \eqref{disc}, in classical information theory there is only one measurement basis. Considering that in reality channels are never ideal, and are in fact characterized by parameters that specify how ``noisy" they are, such as channel capacity or rate distortion~\cite{CoverThomas:ElementsOfInformationTheory}, we treat Bob's measurement as always noisy.
This is equivalent to Bob using a noisy channel~$\mathcal M$,
represented by the stochastic transition matrix~$M$,
followed by perfect measurements of his bit string. Unlike the quantum case,
with uncountable basis choices~$\{\Pi\}$,
the classic noiseless measurement can be performed only in one basis~$\{\bm{i}\}$.
As a stochastic matrix,
$\sum_{\bm{i}}M(\bm{i},\bm{j})=1~\forall\bm{j}$.
The noiseless and maximally noisy cases correspond to $M=\mathds{1}$
and to $M(\bm{i},\bm{j})=2^{-n}~\forall\bm{i},\bm{j}$, respectively.

Now we define classical discord analogous to the quantum discord~\eqref{eq:DIJ}
but with quantum measurement-based mutual information~$J^\Pi$
replaced by a noisy classical measurement-based counterpart defined below.
The noisy measurement apparatus is represented by a channel~$\mathcal{M}$,
which can be represented by a stochastic matrix~$M$.

\begin{definition}
The stochastic-information state with added stochasticity
due to Bob's measurement apparatus
(quantified by stochastic matrix~$M$)
is
\begin{equation}
	p_{AB'}:=p_{AB}M^\mathrm{T},
\end{equation}
for~$\mathrm{T}$ denoting the matrix transpose,
and the subscript~$B'$ denotes that the noise is on Bob's side.
\end{definition}
\begin{definition}
The mutual information of noisy state~$p_{AB'}$ is
\begin{equation}
\label{eq:JM}
	J^\mathcal{M}(p_{AB}):=I\left(p_{AB'}\right)=I\left(p_{AB}M^\mathrm{T}\right).
\end{equation}
\end{definition}
\begin{definition}
The classical discord of state~$p_{AB}$
subjected to B's measurement
with noise~$\mathcal{M}$ represented by stochastic matrix~$M$,
is
\begin{equation}
\label{eq:DMABpAB}
	\mathcal{D}_{A\rightarrow B}^{\mathcal{M}}(p_{AB}):=I(A;B) - J^\mathcal{M}(p_{AB}).
\end{equation}
\end{definition}

\begin{proposition}
Classical discord is nonnegative.
\end{proposition}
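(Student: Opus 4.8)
The plan is to recognize that the classical discord defined in \eqref{eq:DMABpAB} is precisely the drop in mutual information caused by passing Bob's register through the noisy channel~$\mathcal{M}$, and then to invoke the data processing inequality. Since $\mathcal{D}_{A\rightarrow B}^{\mathcal{M}}(p_{AB})=I(A;B)-I(p_{AB}M^{\mathrm{T}})$, nonnegativity is equivalent to the statement that the noisy state~$p_{AB'}$ cannot carry more mutual information than the original~$p_{AB}$.

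First I would make explicit the channel structure of the map $p_{AB}\mapsto p_{AB}M^{\mathrm{T}}$. Writing out the matrix product entrywise and using the column-stochasticity $\sum_{\bm{i}}M(\bm{i},\bm{j})=1$, one obtains
\begin{equation}
p_{AB'}(\bm{i},\bm{k})=\sum_{\bm{j}}p_{AB}(\bm{i},\bm{j})\,M(\bm{k},\bm{j}),
\end{equation}
so that $M(\bm{k},\bm{j})=p(B'=\bm{k}\mid B=\bm{j})$ is a transition law acting on Bob's index alone, independent of Alice's string~$\bm{i}$. This exhibits $B'$ as the output of a channel whose input is~$B$ and which is conditionally independent of~$A$ given~$B$; equivalently, the triple forms a Markov chain $A\rightarrow B\rightarrow B'$. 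Checking that $\sum_{\bm{i},\bm{k}}p_{AB'}(\bm{i},\bm{k})=1$ confirms $p_{AB'}$ is again a legitimate stochastic-information state.

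With the Markov property established, I would apply the data processing inequality (Ch.~2 of~\cite{CoverThomas:ElementsOfInformationTheory}), which asserts that $I(A;B')\leqslant I(A;B)$ whenever $A\rightarrow B\rightarrow B'$. Substituting into the definition yields $\mathcal{D}_{A\rightarrow B}^{\mathcal{M}}(p_{AB})=I(A;B)-I(A;B')\geqslant 0$, as claimed.

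The only point requiring care—and hence the main obstacle—is the verification of the Markov/conditional-independence structure: one must confirm that $M^{\mathrm{T}}$ genuinely acts as a stochastic map on the $B$ index alone and does not couple to Alice's index, since the data processing inequality fails if the ``noise'' is permitted to depend on~$A$. Once the factorization $p_{AB'}(\bm{i},\bm{k})=\sum_{\bm{j}}p_{AB}(\bm{i},\bm{j})\,M(\bm{k},\bm{j})$ is in hand, the remainder is an immediate appeal to a standard inequality.
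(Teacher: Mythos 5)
Your proposal is correct and follows exactly the paper's route: the paper's proof is a one-line appeal to the data processing inequality, which is precisely the inequality you invoke. The only difference is that you explicitly verify the Markov structure $A\rightarrow B\rightarrow B'$ (that $M^{\mathrm{T}}$ acts on Bob's index alone), a step the paper leaves implicit.
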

\begin{proof}
$\mathcal{D}^\mathcal{M}_{{A}\to{B}}(p_{AB})\geqslant 0,~\forall p_{AB}$
follows immediately from the data processing inequality~\cite{CoverThomas:ElementsOfInformationTheory}.
\end{proof}

As any classical~\eqref{eq:pAB} can be ``embedded" into a quantum bipartite state~$\rho_{AB}$
by replacing~$\chi(\bm{i},\bm{j})$ by the quantum projector~\eqref{eq:chi},
namely
$\rho_{AB}=\sum_{\bm{i},\bm{j}}p_{AB}(\bm{i},\bm{j})
\hat\chi(\bm{i},\bm{j})$,
the necessary and sufficient conditions for zero quantum discord \cite{OZ01}
must also be necessary and sufficient for classical discord.
Equivalently,
$p_{AB}$ must be invariant under the measurement $M$ on $B$ side, which we formalize in the following proposition.
\begin{proposition}
\label{thm2}
The classical discord of the state $p_{AB}$ with noise $\mathcal{M}$ on $B$ is zero,
namely $\mathcal{D}^\mathcal{M}_{{A}\to{B}}(p_{AB}) = 0$, 
iff
\begin{equation}
\label{eq:zerodiscord}
	p_{AB'} = p_{AB}.
\end{equation}
\end{proposition}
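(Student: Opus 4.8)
The plan is to unfold the definitions and reduce the claim to the equality case of the data-processing inequality. By \eqref{eq:JM} and \eqref{eq:DMABpAB} we have $\mathcal{D}^{\mathcal M}_{A\to B}(p_{AB})=I(p_{AB})-I(p_{AB'})$ with $p_{AB'}=p_{AB}M^{\mathrm T}$, so the proposition is exactly the assertion that $I(p_{AB})=I(p_{AB'})$ holds iff $p_{AB'}=p_{AB}$. The direction $(\Leftarrow)$ is immediate: equal joint distributions have equal marginals and hence equal mutual information, so zero difference means zero discord, using nothing beyond the definition of $I$.

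For $(\Rightarrow)$ I would write the mutual information as a relative entropy, $I(p_{AB})=D(p_{AB}\,\|\,p_A\otimes p_B)$, and observe that the channel acting on Bob, $\mathrm{id}_A\otimes\mathcal M$, sends $p_{AB}\mapsto p_{AB'}$ and $p_A\otimes p_B\mapsto p_A\otimes p_{B'}$, the $A$-marginal being unchanged because $M$ is column-stochastic, $\sum_{\bm{i}}M(\bm{i},\bm{j})=1$. Monotonicity of the relative entropy under this channel recovers the inequality $I(p_{AB'})\leqslant I(p_{AB})$ already invoked for the preceding proposition, and vanishing discord forces it to saturate. It then remains to deduce $p_{AB'}=p_{AB}$ from saturation. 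This can be organized through the quantum embedding $\rho_{AB}=\sum_{\bm{i},\bm{j}}p_{AB}(\bm{i},\bm{j})\hat\chi(\bm{i},\bm{j})$ flagged in the text: the Ollivier--Zurek condition identifies zero discord on the $B$ side with invariance of the state under the action on $B$, and transporting this invariance back to the diagonal classical state should give $p_{AB'}=p_{AB}$.

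The hard part will be the equality analysis itself, since saturation of the data-processing inequality is, in general, strictly weaker than invariance: any $M$ that is sufficient for the conditional family $\{p_{A|\bm{j}}\}$---for instance a reversible relabeling of Bob's symbols, or a merging of symbols carrying identical conditional statistics---preserves $I$ while altering $p_{AB}$. The proof must therefore isolate the structural feature that forbids these possibilities and forces the stronger conclusion $p_{AB'}=p_{AB}$, and I expect the fixed single-basis nature of the classical measurement to be the hypothesis that has to carry this reduction. Pinning down that step, rather than establishing the inequality, is where I anticipate the care to be concentrated.
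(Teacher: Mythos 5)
Your $(\Leftarrow)$ direction is fine, and unfolding the definitions so that the claim reads ``$I(p_{AB})=I(p_{AB'})$ iff $p_{AB'}=p_{AB}$'' is the right reading. But the step you defer --- finding the ``structural feature'' that upgrades saturation of data processing to invariance --- is not a hard step awaiting one more idea; it is impossible, and your own list of obstacles already proves it. Nothing in the proposition restricts $M$ beyond stochasticity, and nothing restricts $p_{AB}$, so your ``obstacles'' are counterexamples to the statement itself. Concretely: for $p_{AB}=\tfrac12\chi(0,0)+\tfrac12\chi(1,1)$ and $M$ the bit flip, $I(A;B)=I(A;B')=1$, hence $\mathcal{D}^{\mathcal{M}}_{A\to B}(p_{AB})=0$, yet $p_{AB'}=\tfrac12\chi(0,1)+\tfrac12\chi(1,0)\neq p_{AB}$. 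If you object that a permutation is not genuine ``noise,'' take the paper's own maximally noisy channel $M(\bm{i},\bm{j})=2^{-n}$ and the deterministic state $p_{AB}=\chi(0,0)$: both mutual informations vanish, so the discord is zero, while $p_{AB'}=\tfrac12\chi(0,0)+\tfrac12\chi(0,1)\neq p_{AB}$. What zero discord is actually equivalent to is the equality condition of the data-processing inequality, namely that $A$ and $B$ are conditionally independent given $B'$ (sufficiency of $B'$); this Markov condition is strictly weaker than the fixed-point condition \eqref{eq:zerodiscord}. The honest conclusion of your analysis is therefore that the proposition is false as stated and should be repaired to the Markov condition, not that your proof lacks a lemma.

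Your fallback --- transporting the Ollivier--Zurek zero-discord criterion through the diagonal embedding --- is in fact the paper's entire ``proof'' (the paragraph preceding the proposition), and it fails for a reason worth internalizing. Ollivier--Zurek characterize vanishing of the discord minimized over projective measurements: the state must be invariant under some pinching. But every diagonally embedded classical state $\rho_{AB}=\sum_{\bm{i},\bm{j}}p_{AB}(\bm{i},\bm{j})\hat\chi(\bm{i},\bm{j})$ is invariant under the computational-basis pinching on $B$, so every classical state has zero quantum discord in the OZ sense; the embedding therefore cannot distinguish classical states of zero classical discord from the rest, and transplanting OZ's invariance condition onto an arbitrary fixed stochastic $M$ is a non sequitur. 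In short, you located the fault line correctly where the paper glossed over it; the defect also propagates downstream, since the stationary-vector characterization (Case i) and the measure-zero proposition both presuppose that the zero-discord set is cut out by fixed points of $\mathds{1}\otimes M$ rather than by conditional independence.
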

We now investigate condition~\eqref{eq:zerodiscord} for zero classical discord in two cases:
i)~for a fixed noise $\mathcal{M}$, we find all possible states that have zero classical discord, and
ii)~for a given state $p_{AB}$ we show how to find all noisy channels that induce zero classical discord.

\emph{Case i)} We employ reshaping to find all possible states $p_{AB}$ that satisfy~\eqref{eq:zerodiscord}. A square matrix~$C$ can be reshaped into a column vector~$\bm{C}$
by employing a column-major order;
i.e., each column is stacked on top of its next adjacent column.
Arbitrary $A,B,C,D$ square matrices satisfy the identity
\begin{equation}
	ACB^T=D\iff (A\otimes B)\bm{C}=\bm{D}
\end{equation}
so Proposition~\ref{thm2} is equivalent to
\begin{equation}
\label{eq:Dreshaped}
	\mathcal{D}^\mathcal{M}_{{A}\to{B}}(p_{AB}) = 0 \iff (\mathds{1}\otimes M)\bm{p}_{AB}
		=\bm{p}_{AB}.
\end{equation}
In other words, $\bm{p}_{AB}$ must be a fixed point of $\mathds{1}\otimes M$
and must also be a valid probability vector.
Fixed points with such properties are called \emph{stationary vectors}.

Stochasticity of $M$ implies stochasticity of $\mathds{1}\otimes M$,
and the Perron-Frobenius theorem states that any stochastic matrix has at least one stationary vector. Therefore, a solution to the right-hand side~\eqref{eq:Dreshaped} always exists.
To find the class of all~$R$ possible states of zero discord,
let $\{\bm{m}_1,\bm{m}_2,\ldots,\bm{m}_R\}$
be the largest set of linearly independent stationary vectors of~$M$
($R=1$ if all entries of~$M$ are strictly positive).

Thus, a stationary vector of $\mathds{1}\otimes M$ must have the form
\begin{equation}
\label{eq:form}
	\sum_{j=0}^{2^n}\sum_{k=0}^R q_{jk}\mathds{1}_j\otimes\bm{m}_k,\;
	\sum_{j=0}^{2^n}\sum_{k=0}^R q_{jk}=1,\;
	q_{jk}\geq 0~\forall j,k,
\end{equation}
and $\mathds{1}_j$ denotes the $j^\text{th}$ eigenvector of $\mathds{1}$,
i.e., the vector comprising all zeros except for a unique~$1$ at the $j^\text{th}$ position.
Therefore, a class of zero-discord states $\mathcal{S}_0=\{\bm{p}_{AB}\}$ exists
with each~$\bm{p}_{AB}$ of the form~\eqref{eq:form}.

\emph{Case ii)} For a given fixed state $p_{AB}$, we now show how to find all possible 
noisy channels that induce zero classical discord.
The right-hand side of~\eqref{eq:Dreshaped} can be rewritten as
$(\mathds{1}\otimes M-\mathds{1})p_{AB}^T = 0$,
which holds iff 
\begin{equation}
\label{eq:channelclass}
	(\mathds{1}\otimes p_{AB})\bm{M}=\bm{p}^T_{AB}.
\end{equation}
Thus, given $p_{AB}$,
the class of channels for which the classical discord is zero can be found from solving~\eqref{eq:channelclass}
with the additional restriction that $\bm{M}$ is a valid representation of a classical channel
(i.e., its reshaping~$M$ is stochastic).
This last restriction is a linear constraint, and, therefore, the solutions to~\eqref{eq:channelclass} can be found using linear programming \cite{MatousekGartner:LinProg}.

Just as zero-discord quantum states have zero measure \cite{PhysRevA.81.052318}, zero-discord classical states 
have zero measure except in the singular case of perfect measurement, 
as formalized below. 
\begin{proposition}
The set of classical zero-discord states has measure zero in the set of all classical bipartite states
except for noiseless measurements.
\end{proposition}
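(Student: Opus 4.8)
The plan is to bound the dimension of the affine subspace in which all zero-discord states must lie, and to show that this dimension is strictly smaller than the dimension of the full state space whenever the measurement is noisy; then the zero-discord set is contained in a proper affine subspace and is therefore a Lebesgue-null set.

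First I would fix the ambient space: the set of all bipartite classical states $p_{AB}$ is the probability simplex on $2^{2n}$ outcomes, a convex body of affine dimension $2^{2n}-1$, and ``measure zero'' is understood relative to the Lebesgue measure carried by this simplex. I would then reuse the characterization of Case i). By \eqref{eq:Dreshaped} and \eqref{eq:form}, every zero-discord state $\bm{p}_{AB}$ lies in the real span of the $2^n R$ vectors $\mathds{1}_j\otimes\bm{m}_k$, where $\mathds{1}_j$ runs over the $2^n$ standard basis vectors and $1\le k\le R$. Because the $\mathds{1}_j$ are linearly independent and the stationary vectors $\bm{m}_k$ are linearly independent by construction, their tensor products are linearly independent, so this span has dimension exactly $2^n R$. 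After imposing the normalization $\sum q_{jk}=1$, the zero-discord set $\mathcal{S}_0$ sits inside an affine subspace of dimension $2^n R-1$.

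The pivotal step is the elementary observation that $R=2^n$ if and only if $M=\mathds{1}$. Indeed, $R=\dim V_1(M)$ is the dimension of the eigenspace of $M$ for eigenvalue $1$ (the space of stationary vectors); if this equals $2^n$ then $M$ fixes every vector of $\mathbb{R}^{2^n}$ and is forced to be the identity, the converse being immediate. Consequently, for any noisy measurement with $M\neq\mathds{1}$ one has $R\le 2^n-1$, whence $\dim\mathcal{S}_0\le 2^n(2^n-1)-1=2^{2n}-2^n-1<2^{2n}-1$. A set contained in a proper affine subspace of the simplex has vanishing Lebesgue measure, which establishes the claim.

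Finally, the noiseless case is precisely the excepted singularity: when $M=\mathds{1}$ the definition gives $p_{AB'}=p_{AB}$ for every state, so by Proposition~\ref{thm2} all states have zero discord and $\mathcal{S}_0$ fills the whole simplex. I do not expect a serious obstacle here; the only points requiring care, rather than genuine difficulty, are the equivalence $R=2^n\Leftrightarrow M=\mathds{1}$ and verifying that normalization reduces the dimension by exactly one, which holds because at least one $\bm{m}_k$ is a genuine probability distribution and so is not annihilated by the summation functional.
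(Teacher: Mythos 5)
Your proof is correct and takes essentially the same route as the paper's: both arguments confine the zero-discord states to the $2^nR$-dimensional fixed-point space spanned by the vectors $\mathds{1}_j\otimes\bm{m}_k$ and conclude measure zero by comparing this dimension with that of the full simplex, with the exception $R=2^n\iff M=\mathds{1}$ corresponding to noiseless measurement. Your extra steps (linear independence of the tensor products, the proof that $R=2^n$ forces $M=\mathds{1}$, and the explicit null-set conclusion from containment in a proper affine subspace) simply make rigorous what the paper's terse proof leaves implicit.
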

\begin{proof}
A zero-discord state can be represented by the distribution~$\{q_{jk}\}$~\eqref{eq:form},
which has a $2^nR$-dimensional domain.
Unless $R=2^n$,
which pertains to $\bm{M}=\bm{\mathds{1}}$,
the domain of the distribution is strictly lower in dimension
so the set of zero-discord states has measure zero unless the channel is noiseless.
\end{proof}

We now have a definition of classical discord,
demonstrated its natural correspondence with quantum discord,
showed that nonzero classical discord is due to measurement noise on Bob's side,
and proved that states with zero classical discord occupy zero measure. 

While in the quantum case the natural stochasticity present on the ambiguous choice of the local measurement basis is sufficient to led to nonzero discord, in the classical case such stochasticity is not present, as explained above, and the discord is only different from zero if another kind of stochasticity, the noisy measurement, is present. The minimization present in~\eqref{disc} is absent in the classical discord definition due to this absence ambiguity in the measurement basis. However, if the theory is relaxed in a way that the stochastic matrix  $M$ is not fixed, a minimization over the set of allowed matrices $M$ is required.

\begin{remark} By combining~\eqref{eq:JM} and~\eqref{eq:DMABpAB} for unfixed channels ${\mathcal{M}}$ the classical stochastic discord reads
\begin{eqnarray}
\label{eq:discstoch}
\mathcal{D}_{A\rightarrow B}^{C}(p_{AB})&:=&\min_{\mathcal{M}} \mathcal{D}_{A\rightarrow B}^{\mathcal{M}}(p_{AB})\nonumber\\
&=&\min_{\mathcal{M}} H(A|B') -H(A|B),
\end{eqnarray}
which explicitly shows its usefulness in signaling how much the system state is disturbed by the (imperfect) measurement, similarly to the quantum discord. Throughout the rest of the text we deal only with $\mathcal{D}_{A\rightarrow B}^{\mathcal{M}}(p_{AB})$ to simplify the discussion.
\end{remark}

Now we establish an operational meaning via state merging, analogous to the operationalized
quantum discord.

\begin{definition}
State merging is a two-party task whereby the bipartite state~$p_{AB}$
comprising two shares of size~$n$ and~$m$ bits
is merged into a unipartite state of size~$n+m$ bits held by one party
such that the merged state has the same distribution as the original bipartite state.
\end{definition}
\begin{remark}
The term ``state merging'' arose in quantum information~\cite{HOW05,HOW07} but quantizes 
the notion of compressing correlated classical data streams~\cite{SW73a,SW73b}.
We use the term state merging for the classical case because our objective is to 
connect quantum discord with classical information theory so transferring quantum terminology
to the classical domain is appropriate here.
\end{remark}

We consider two-bit state merging 
(Alice holds one bit and Bob holds the other bit)
as this simple case suffices to establish operational
meaning for classical discord.
The following argument is readily extended to multiple bits held by each party.
In the quantum case,
the operational interpretation of quantum discord arises through a Gel'fand-Naimark purification
of the bipartite state~$\rho_{AB}$ with the resultant tripartite state~$\rho_{ABC}$ being pure~\cite{GN43}.

We purify $p_{AB}$ to a tripartite state according to Def.~\ref{def:pure}
except that in this tripartite case conditional purity means that $H(A|BC)=0$ and also for $B|CA$ and $C|AB$.
Specifically the purified state is
\begin{equation}\label{eq:classical_pure}
	p_{ABC}=(1-q)\chi(a,b,c)+q\chi(\bar{a},\bar{b},\bar{c}),\;
	a, b, c\in\{0,1\},
\end{equation}
with~$0\leqslant q\leqslant 1$ an arbitrary mixing parameter associated with noisy preparation
and~$\bar{a}$ denoting the logical negation of~$a$;
i.e., $\bar{0}=1$ and~$\bar{1}=0$.
As before,
Bob's measurement is through a noisy channel~$\mathcal M$.
We treat Charlie's measurement technology as equivalent to Bob's so
Charlie's measurement is also through~$\mathcal M$.

We now show that the quantum relation between discord and conditional entropy \cite{PhysRevA.84.012313}
holds in the classical case as well.
To show this equivalence,
we consider the tripartite $ABC$ system
and first determine the classical discord between~$A$ and~$C'$ (noisy~$C$)
and then the conditional entropy between~$A$ and $B'$.
Then we see that that the classical discord for~$p_{AC'}$
is the conditional entropy for~$p_{AB'}$.

By our definition of classical discord\eqref{eq:DMABpAB}
\begin{align}
\label{eq:DACHAB}
	\mathcal{D}^{\mathcal{M}}_{A\rightarrow C}(p_{AC})
		=& H(A) + H(C) - H(A, C) \notag\\
		& - \left[H(A) + H(C') - H(A, C')\right]\notag\\
		=& H(C) - H(A, C) + H(A,C')- H(C') \notag\\
		=& H(A|C') = H(A|B').
\end{align}
The last line of~\eqref{eq:DACHAB} follows from the previous line because a simple calculation yields $H(C) = H(A,C) =: h_2(q)$, the entropy of the single bit with flip probability $q$, and $H(A|C') = H(A|B')$ because of the symmetry of the state~\eqref{eq:classical_pure} and the fact that the noise on $B$'s side is assumed to be the same as the noise on $C$'s side.

\begin{remark}
In the quantum case, $\mathcal{D}^{\mathcal{M}}_{A\rightarrow C}(p_{AC})$~\eqref{eq:DACHAB}
is modified by including entanglement of formation
between A and B, and entanglement of formation must be zero classically as entanglement is forbidden.
Hence, in the quantum case, 
conditional entropy can be negative as shown in quantum state merging,
but not in the classical case due to entanglement of formation being zero.
\end{remark}
In the case of quantum state merging, the total entanglement that is consumed between $A$ and $B$ is equal to the discord between $A$ with measurements on $C$ \cite{HOW05,HOW07, PhysRevA.84.012313}
\begin{equation}\label{eqnv1}
D(A|C) = E_F(A:B) + S(A|B),
\end{equation}
where $E_F(\bullet)$ denotes the entanglement of formation between $A$ and $B$ and $S(A|B)$ is the conditional entropy of $A$ given $B$. The conditional entropy being negative implies leftover entanglement at the end of the merging protocol. In contrast, our relation \eqref{eq:DACHAB} has no term analogous to the entanglement of formation.

We have addressed the question of whether discord is exclusive to quantum systems or whether it can manifest also in a stochastic information theory. Our method for showing nonzero classical discord has been to construct
a model for stochastic information, which incorporates a noisy measurement
represented by a stochastic channel.
A direct consequence of measurement being noisy is the in-equivalence of defining
classical mutual information in terms of joint vs conditional entropy.
In an operational treatment of conditional measurement,
conditional entropy involves measuring a string and announcing these results to the other 
party.
Nonzero noise in the measurement causes mutual information based on conditional entropy
to differ from mutual information based on joint entropy.
Our findings clearly demonstrate that the notion of discord can be extended as a quantifier of stochasticity in general. In the nonquantum case stochasticity is introduced by imperfect or noisy measurements, while in the quantum case it is naturally introduced by the arbitrariness of the local measurement basis. In both situations the discord measures how much the joint state is affected by a local measurement.

A consequence of our classical discord investigation is that
the role of entanglement
is precisely on the negativity of conditional information.
This negativity only arises if entanglement of formation is nonzero.
Thus, discord could be understood classically through a stochastic information figure of merit
except when entanglement of formation is nonzero.
Operationally nonzero entanglement of formation is precisely what distinguishes our classical state merging from genuine quantum state merging~\cite{HOW05,HOW07}. 

The present discussion significantly contributes to reinforce the notion of discord as a measure of disturbance of a state under measurement. In addition to the fundamental issues that can be raised vis-\`a-vis the quantum discord and the quantum-classical separation, our definition of classical discord may be relevant for other well-established information-theoretical related areas such as measurement theory and statistical inference.

\begin{acknowledgments} 
	V.G. acknowledges financial support from PIMS, Industry Canada and NSERC. M.C.O. acknowledges support by FAPESP and CNPq through the National Institute for Science and Technology of Quantum Information (INCT-IQ) and the Research Center in Optics and Photonics (CePOF). 
	B.C.S. has been supported by AITF and NSERC.
	V.G. acknowledges useful discussions with M.~Piani, A.~Brodutch and J.~Combes.
\end{acknowledgments}


%

\end{document}